\documentclass[a4paper,twocolumn,11pt,accepted=2021-12-31]{quantumarticle}
\pdfoutput=1
\usepackage[utf8]{inputenc}
\usepackage[english]{babel}
\usepackage[T1]{fontenc}
\usepackage{amsmath}
\usepackage{hyperref}

\usepackage{tikz}
\usepackage{lipsum}

\usepackage{amsfonts}
\usepackage{amsthm}
\usepackage{amsmath}
\usepackage{amscd}

\usepackage{t1enc}
\usepackage[mathscr]{eucal}
\usepackage{indentfirst}
\usepackage{graphicx}
\usepackage{graphics}
\usepackage{xcolor}

\usepackage{ulem}

\numberwithin{equation}{section}

\theoremstyle{plain}
\newtheorem{Th}{Theorem}
\newtheorem{Lemma}[Th]{Lemma}

 \theoremstyle{definition}

\newtheorem{Rem}[Th]{Remark}
\newtheorem{?}[Th]{Problem}

\newcommand{\tr}{\operatorname{tr}}

\usepackage{soul}

\begin{document}

\title[On entanglement assistance to a classical channel]{On entanglement assistance to a noiseless classical channel}

\author{P\'eter E.\ Frenkel}
\affiliation{E\"{o}tv\"{o}s Lor\'{a}nd University,
  P\'{a}zm\'{a}ny P\'{e}ter s\'{e}t\'{a}ny 1/C, Budapest, 1117 Hungary \\ and R\'enyi Institute,  Budapest, Re\'altanoda u.\ 13-15, 1053 Hungary}
  \email{frenkelp265@gmail.com}
\orcid{0000-0003-2672-8772}
\thanks{Partially supported  by  ERC Consolidator Grant 648017 and by NKFIH  grants KKP 139502 and  K 124152.}
\author{Mih\'aly Weiner}
\affiliation{Budapest University of Technology and Economics (BME), Department of Analysis,
  H-1111 Budapest
  M\H{u}egyetem rkp. 3--9 Hungary, and
  MTA-BME Lend\"ulet Quantum Information Theory Research Group}
\email{mweiner@math.bme.hu}
\orcid{}
\thanks{Supported by the Bolyai J\'anos Fellowship of the Hungarian Academy of Sciences, the \'UNKP-21-5 New National Excellence Program of the Ministry for Innovation and Technology, by the NRDI grants  K 124152, KH129601 and K132097 and by the NRDI Office within the Quantum Information National Laboratory of Hungary.}
\maketitle

\begin{abstract}
For  a classical channel, neither the Shannon ca\-pa\-city, nor the sum of conditional probabilities corresponding to
the cases of successful transmission   can be increased by the use of shared entanglement, or, more generally, a non-signaling resource.
Yet, perhaps somewhat counterintuitively,
entanglement assistance can help and actually elevate the chances of success even in
a one-way communicational task that is to be completed by a single-shot use of a
noiseless classical channel.

To quantify the help that  a non-signaling resource provides to a noiseless classical channel, one might ask how many extra letters should be added to the alphabet of the  channel  in order to  perform equally well {\it without} the specified non-signaling  resource. As was observed  by Cubitt, Leung, Matthews, and Winter, there is no upper bound on the number of extra letters required for substituting the assistance of a general non-signaling resource to a noiseless one-bit classical channel. In contrast, here we prove that if this resource is a bipartite quantum system in a maximally entangled state, then an extra classical bit always suffices as a replacement.
\end{abstract}

\maketitle

\section{Introduction}

If a certain two-part resource is {\it non-signaling}, then --- essentially by definition --- it cannot be used to exchange messages between its two users.
This applies to  hypothetical resources such as  Popescu--Rohrlich boxes \cite{pr_box} as well as to resources that can be realized within the framework of quantum physics by sharing a two-part quantum system in a prearranged entangled state.
However, as an aid, such a resource might boost the capabilities of an already existing communicational channel between its users.
For example, in  the famous {\it dense coding protocol} \cite{densecoding}, entanglement is used to double the classical capacity of a noiseless quantum channel. When the quantum channel is noisy, entanglement can give an arbitrarily large boost (depending on the channel) \cite{noisyquantum}. This noise-reducing capability of entanglement
has even been  experimentally demonstrated; see e.g.\! \cite{experimental} for recent developments.

The situation  changes somewhat when the channel to be improved is a classical one. This is because it turns out that important quantities such as ``information storability'' --- that is, the sum of conditional probabilities corresponding to the ``output = input'' cases; see  e.g.\!  \cite{geoinfcap} --- or the Shannon capacity of a classical channel cannot be increased\footnote{At least, not in the one-sender-one-receiver case  considered in this paper; see however \cite{multiclassical} for an interesting application of entanglement regarding ``multi-channels''.}
 by the additional use of a non-signaling resource
\cite{zeroerror2}.
However,   entanglement (which  is a particular  type of non-signaling resource) can be used, for example, to increase the zero-error capacity of a noisy classical channel \cite{zeroerror1}. To put it in another way, it can improve the capability of a noisy classical channel to simulate a noiseless one.

One might also consider the inverse problem of using a noiseless classical channel (together with the possible help of a non-signaling resource) to simulate a noisy one. In \cite{zeroerror2}, an example was given for a classical channel which cannot be simulated by (a single use of) a noiseless one-bit classical channel aided only by shared randomness, but which can be simulated by the same channel if assistance, in the form of using a bipartite quantum system prepared in an entangled state, is allowed. Using the concepts introduced in \cite{sigdim}, we may say that the assistance increases the ``signaling dimension'' of our  noiseless one-bit classical channel.

\subsection{Game interpretation}

We might view this simulability question from the point of view of one-way communicational tasks. For example, let us consider the following simple game. We have four boxes, two of them empty, two containing (equal) treasures; thus, there are ${4\choose 2}=6$ possible configurations regarding the positions of the treasures. Each configuration is equally likely, with the actual (secret) configuration revealed
only to Alice, who is allowed to send one classical bit to Bob. After receiving the bit sent by Alice, Bob chooses a box. If it contains a treasure, Alice and Bob win (as a team).

Without a non-signaling resource, relying only on arrangements before the game and a possible use of shared randomness, it is easy to see that the maximum chance of winning can be achieved by a {\it pure} (deterministic) strategy
---  in terms of expected reward, shared randomness is of no use.

   We may assume that upon receiving the bit sent by Alice, depending on its value, Bob points to either box  1 or box  2. With this agreed, Alice can always send a bit value to Bob that makes him point to a treasure box unless the treasures are  in boxes   3 \& 4; that is, Alice and Bob will win
    with a chance of $1-{1}/{6}={5}/{6}$. On the other hand, if during the game Alice and Bob can also make some measurements on a pair of quantum bits (prepared in a maximally entangled state and distributed between them before the start of the game), then there exists a strategy allowing them to  win this game with a chance of $\left({4+\sqrt{2}}\right)/{6}>{5}/{6}$ ---  see the Appendix.

In this protocol involving the use of entanglement, Alice and Bob realize a certain classical channel $\mathcal N\in
C(X\to Y)$ with $|X|=6$ possible inputs and $|Y|=4$ outputs (with the input being the whereabouts of the treasures revealed to Alice and the output being the box chosen by Bob). The use of this channel allows Alice and Bob to win the game with a chance of $\left({4+\sqrt{2}}\right)/{6}$. The fact that $\left({4+\sqrt{2}}\right)/{6} > {5}/{6}$ shows that channel $\mathcal N$ cannot be simulated by a single use of a noiseless classical one-bit channel aided only by shared randomness.

Let us consider, in general, for any pair of (finite) sets $X,Y$, natural number $n$ and non-signaling resource $\omega$ the following:
\begin{itemize}
\item $C_n(X\to Y)$, the set of $X\to Y$ classical channels that can be simulated by a single use of a noiseless classical channel with $n$  different letters (without any other resources),

\item $C_n^{SR}(X\to Y)$, the set of $X\to Y$ classical channels that can be simulated by a single use of a noiseless classical channel with $n$  different letters together with an unlimited source of shared randomness between the sender and receiver,

\item $C_n^\omega(X\to Y)$, the set of $X\to Y$ classical channels that can be simulated by a single use of a noiseless classical channel with $n$  different letters and assistance coming from $\omega$,

\item $C_n^{BQ}(X\to Y)$, the set of $X\to Y$ classical channels that can be simulated by a single use of a noiseless classical channel with $n$  different letters and assistance from any bipartite quantum system (prepared in any state),

\item $C_n^{NS}(X\to Y)$, the set $X\to Y$ classical of channels that can be simulated by a single use of a noiseless classical channel with $n$  different letters and assistance from any non-signaling resource.
\end{itemize}
We postpone  to Section~\ref{prelim}  the precise definition and detailed description of these sets, and -- omitting the $X\to Y$ indication -- note here only that  $C_n^{SR}$ is  the convex hull of $C_n$; 
$C_n^{BQ}$ and $C_n^{NS}$ are convex sets; and
$C_n^{SR}\subseteq C_n^{BQ} \subseteq C_n^{NS}$.

We now generalize the game above and connect the question of (im)possibility of simulations
to advantages in one-way communicational games. We begin  by  describing what we mean by a general one-way communicational game.

Suppose we have a team of players consisting of Alice and Bob. An element $x$ of a (finite) set $X$ is chosen according to a given probability distribution $q$, and revealed to Alice (but not to Bob). At the end of the game, Bob will need to pick an element $y$ of another (finite) set $Y$ and the team receives a reward, but the actual sum of this reward depends {\it both} on his choice $y$ and on the input $x$; it is given by some ``reward-function'' $R:X\times Y\to \mathbb R$. With both the probability distribution $q$ and reward-function $R$ publicly given, we shall consider how the maximum expected reward (achieved by the best team strategy) depends on 
 the allowed forms of communication and non-signaling resources the team can use.

Once a team strategy (using the given channels and resources) is chosen, the conditional probability of Bob choosing $y$, given that Alice receives input $x$, is fixed. Thus, an actual strategy realizes a classical $\mathcal N\in C(X\to Y)$ channel. The expected reward is a linear functional of the realized channel:
$$
\mathbb E({\text {reward}}) = \sum_{x\in X,y\in Y} 
 R(x,y)N(y|x)q(x),
$$
with the functional depending on $R$ and $q$. Since we want to consider all such games, we do not have a restriction on possible reward functions and input probability distributions and thus, for us, the expected reward is just an arbitrary linear functional of the realized channel. It follows that there exists  a one-way communicational game in which the single use of a classical noiseless channel with $n$ different letters together with assistance coming from a non-signaling resource $\omega$ is more advantageous (in terms of maximal expected rewards) than the single use of a classical noiseless channel with $n'$ different letters together with assistance from a non-signaling resource $\omega'$ if and only if
$C_n^\omega(X\to Y)$ is not contained in the convex hull of $C_{n'}^{\omega'}(X\to Y)$ for some $X$ and $Y$. In particular, with assistance coming from a non-signaling resource $\omega$, the use of a classical noiseless channel of $n$ letters is never more advantageous than a single use of an unaided
classical noiseless channel of $m$ different letters if and only if
$$
C_n^\omega(X\to Y) \subseteq C_m^{SR}(X\to Y)
$$
for all possible sets $X$ and $Y$ of input and output symbols.

\subsection{Entanglement vs.\! generic non-signaling resources}

By \cite[Proposition 19]{zeroerror2}, for every $n$ there exists a non-signaling resource $\omega_n$ such that $C_2^{\omega_n}$ is {\it not} contained in $C_n^{SR}$ (for some sets of input and output symbols which from now on we shall omit). Thus, in the sense explained, there is no bound on the advantage that a non-signaling resource can give
to a one-bit classical noiseless channel.  However, the non-signaling resources used in these examples correspond to hypothetical devices that may have no realizations in nature. In  fact, in contrast to the general non-signaling case,  we shall prove in Theorem~\ref{main} that
$
C_2^{\omega}\subseteq C_4^{SR}
$
whenever $\omega$ is realizable by a quantum bipartite system prepared in a maximally entangled state.
The result is general in the sense that it holds without any limit on the size of the quantum system (i.e.\!
 the dimension of the Hilbert space) used. However, we do exploit that the state is a maximally entangled one. Since there are Bell inequalities whose maximal violation occurs in
states which are {\it not} maximally entangled \cite{nonmaximal}, it remains unclear whether
$C_2^{BQ}\subseteq C_4^{SR}$. Nevertheless, we conjecture that this is indeed so; maybe even $C_2^{BQ}\subseteq C_3^{SR}$. Also, it is a  natural guess that $C_n^{BQ}$ should   always  be contained in $C_{n^2}^{SR}$.

If this  is true, it may  turn out to be a particular manifestation of some generic behavior of nature, limiting the ``extra help'' provided by entanglement to be comparable to ``our efforts'':
while generic non-signaling resources can yield  ``unlimited help''  in  one-way communicational tasks, a bipartite  quantum system cannot  give more advantage  than that offered by  allowing a second shot of the employed classical noiseless channel. 
 Similarly to the so-called ``Information Causality'' \cite{infcaus} or the idea of ``No-Hypersignaling'' \cite{sigdim}, this could be viewed as a fundamental principle limiting the non-signaling resources that can appear in nature.

 An interesting parallel can be drawn between the above  and  the {\it graph homomorphism game} \cite{quantumhomo}. With only classical pre-arrangement, Alice and Bob can win that game with certainty only if there exists a homomorphism from graph 1 to graph 2. Allowing them to use any non-signaling resource lets them win with certainty whenever 
 graph 2  has at least one edge. 
   Instead, the use of entanglement
provides a limited help, thus giving rise to the
interesting question: for which pair of graphs is there a ``quantum homomorphism'', i.e.\! a winning strategy using entanglement assistance?

\section{Preliminaries}\label{prelim}

Given two finite sets  $X,Y$ (the ``alphabets''), a classical channel from $X$ to $Y$ is a function
$$\mathcal N: Y\times X \to [0,1]$$ satisfying
$\sum_{y\in Y} \mathcal N(y|x)=1$ for all $x\in X$. We interpret the value $\mathcal N(y|x)$ as the probability of the channel producing the output $y$ given that the input is $x$, and we denote by $C(X\to Y)$ the set of all classical channels from $X$ to $Y$.

For a natural number $n$, set $[n]\equiv  \{1,\ldots, n\}$.
We shall say that $\mathcal N\in C(X\to Y)$ can be realized
by a single use of a noiseless classical channel with $n$
different letters if there exists a pair of {\it encoding}
and {\it decoding}, i.e., channels
$\mathcal N_{enc}\in C(X\to [n])$ and
$\mathcal N_{dec}\in C([n]\to Y)$
such that
$$
\mathcal N(y|x) = \sum_{r=1}^n
\mathcal N_{dec}(y|r) \, \mathcal N_{enc}(r|x)
$$
for all $x\in X$ and $y\in Y$. We denote the set of all such channels $\mathcal N$ by $C_n(X\to Y)$.

Using a source of randomness shared between the sender and receiver, it is possible to mix different encoding--decoding strategies. Thus, the set of
classical channels $C_n^{SR}(X\to Y)$ realizable by a single use of a noiseless classical channel with $n$ different letters aided by an unlimited source of shared randomness is simply the convex hull of $C_n(X\to Y)$. We note that $\mathcal N\in C_n^{SR}(X\to Y)$ if and only if the stochastic matrix
$A$ with entries $a_{i,j}:=\mathcal N(y(i)|x(j))$, where
$x:[l]\to X$ and $y:[k]\to Y$ are some bijections enumerating the $l=|X|$ and $k=|Y|$ elements of $X$ and $Y$,
is a convex combination of stochastic matrices with at most $n$
nonzero rows. (Throughout 
 this paper, by a  {\it stochastic} matrix we mean a matrix with nonnegative entries
whose columns sum to $1$; i.e., $A=(a_{i,j})_{i,j}$ is a stochastic matrix if $a_{i,j}\geq 0$ for all $i$ and $j$, and $\sum_{i} a_{i,j}=1$ for all $j$.)

In our context, a two-part resource $\omega$ is just a classical channel with two inputs and two outputs; i.e.\! an element of $C(X_1\times X_2\to Y_1\times Y_2)$, where $X_1,X_2,Y_1,Y_2$ are some finite sets. We say that $\omega$
is {\it non-signaling} if for any $x_1,x'_1\in X_1$
and $(x_2,y_2)\in X_2\times Y_2$, we have
$$
\sum_{y_1\in Y_1} \omega(y_1,y_2|x_1,x_2) =\sum_{y_1\in Y_1} \omega(y_1,y_2|x'_1,x_2),
$$
i.e., if the choice of the input at access point  1 does not affect the outcome probabilities at access point  2, and, further,  the same holds in the other direction as well:
$$
\sum_{y_2\in Y_2} \omega(y_1,y_2|x_1,x_2) =\sum_{y_2\in Y_2} \omega(y_1,y_2|x_1,x'_2)
$$
for any $x_2,x'_2\in X_2$ and $(x_1,y_1)\in X_1\times Y_1$.

A channel $\mathcal N\in C(X\to Y)$ can be realized by a single use of a noiseless classical channel with $n$ different letters
assisted by a non-signaling resource $\omega$ if there exist
\begin{itemize}
\item a coding  $\mathcal N_{in1}\in C(X\to X_1)$ for the sender  to  select an input for her part of the resource,
\item an encoding $\mathcal N_{enc}\in C(X\times Y_1\to [n])$ for the sender to select (in light of the response of the resource) the message  to be sent,
\item  a coding  $\mathcal N_{in2}\in C([n]\to X_2)$ for the receiver to select an input for his part of the resource,
\item a decoding  $\mathcal N_{dec}\in C([n]\times Y_2\to Y)$ for the receiver to select the output,
\end{itemize}
such that for all $x\in X$ and $y\in Y$,  the transition  probability $\mathcal N(y|x) $  is the sum of all products
\begin{align*}
\mathcal N_{dec}(y|r,y_2)
\mathcal N_{in2}(x_2|r)\cdot \\ \cdot\mathcal N_{enc}(r|x,y_1)
\omega(y_1,y_2|x_1,x_2) \mathcal N_{in1}(x_1|x)
\end{align*}
for 
$r\in[n]$ and $
(x_1,x_2,y_1,y_2)\in X_1\times X_2\times Y_1\times Y_2$. We denote by $C_n^\omega(X\to Y)$ the set of all such channels and by $C_n^{NS}(X\to Y)$ the union of these sets taken over all non-signaling resources  $\omega$. Note that this latter set is automatically convex; this is because shared randomness is also a particular case of a non-signaling resource.

 On a Hilbert space $\mathcal H$, the algebra of bounded linear operators is denoted by $\mathcal B(\mathcal H)$.
Recall that  a  {\it partition of unity} (also known as a {\it positive operator valued measure}; POVM)
on  a  Hilbert space is a collection
of positive semidefinite operators summing to the identity operator $\mathbf 1$.

Let $\mathcal H_A$ and $\mathcal H_B$ be two (complex) Hilbert spaces and
$\rho$ a {\it density operator} on $\mathcal H_A\otimes \mathcal H_B$; i.e.\! a positive semidefinite operator with
$\tr\rho=1$. A two-part resource $\omega\in C(X_A\times X_B\to Y_A\times Y_B)$ is realizable by the use of a bipartite quantum system (with parts corresponding to the spaces $\mathcal H_A$ and $\mathcal H_B$) in state $\rho$ if, for each $x_a\in X_A$ and $x_b\in X_B$, there exist a  partition of unity $\left(F^{(x_a)}_{y_a}\right)_{y_a\in Y_A}$ on $\mathcal H_A$
 and a partition of unity $\left(E^{(x_b)}_{y_b}\right)_{y_b\in Y_B}$ on $\mathcal H_B$ such that
$$
\omega(y_a,y_b|x_a,x_b) = \tr \rho\left(F^{(x_a)}_{y_a}\otimes E^{(x_b)}_{y_b}\right)
$$
for all $(x_a,x_b,y_a,y_b)\in X_A\times X_A\times Y_B\times Y_B$. We note that such a resource is automatically non-signaling, and introduce $C_n^{BQ}(X\to Y)$ as
the union of the sets $C_n^\omega(X\to Y)$ with $\omega$ ranging over all non-signaling resources realizable by the use of some bipartite quantum system prepared in some state.

For a bounded linear operator $Z$ on $\mathcal H_A$, consider the bounded linear operator
\begin{equation}
\label{defofphi}
\Phi_\rho(Z) \equiv \tr_A \rho (Z\otimes \mathbf 1)
\end{equation} on $\mathcal H_B$,
where $\tr_A$ denotes the {\it partial trace} corresponding to $\mathcal H_A$. It is easy to check that
this is well defined and the linear map $\Phi_\rho:\mathcal B(\mathcal H_A)\to\mathcal B(\mathcal H_B)$
 is {\it positive}: if $Z\geq \mathbf 0$, then
$\Phi_\rho(Z)\geq \mathbf 0$.
Let us now introduce, in the previous construction of the non-signaling resource $\omega$, the operator
$$
\beta^{(x_a)}_{y_a}\equiv \Phi_\rho\left(F^{(x_a)}_{y_a}\right),
$$
 i.e.\! the (subnormalized) conditional state of the second subsystem given that the measurement of $x_a$ on part one gave result $y_a$.
Then, for each $x_a\in X_A$, the operators
$\left(\beta^{(x_a)}_{y_a}\right)_{y_a\in Y_a}$ form a {\it positive decomposition} of $\rho_B\equiv \tr_A\rho$; i.e., $\beta^{(x_a)}_{y_a} \geq \mathbf 0$ for all
$x_a$ and $y_a$, and
$$
\sum_{y_a\in Y_a} \beta^{(x_a)}_{y_a} = \rho_B\equiv \tr_A\rho
$$
for all $x_a$. With these newly introduced operators, we can express $\omega$ as follows:
\begin{equation}
\label{fromBobspointofview}
\omega(y_a,y_b|x_a,x_b) = \tr E^{(x_b)}_{y_b} \beta^{(x_a)}_{y_a}
\end{equation}
for all $(x_a,x_b,y_a,y_b)\in X_A\times X_B\times Y_A\times Y_B$. This shows that for a non-signaling resource
$\omega$ to be realizable by the use of a bipartite quantum
system, with parts corresponding to the Hilbert spaces $\mathcal H_A$ and $\mathcal H_B$, prepared in the state given by the density operator $\rho$, there must exist, for each
$x_a\in X_A$ and $x_b\in X_B$, a positive decomposition
$\left(\beta^{(x_a)}_{y_a}\right)_{y_a\in Y_a}$ of $\rho_B=\tr_A \rho$
and a partition of unity $\left(E^{(x_b)}_{y_b}\right)_{y_b\in Y_b}$
such that (\ref{fromBobspointofview}) holds. In some cases, we can turn this construction the other way around.
\begin{Lemma}\label{existenceofPOVM}
Let $\mathcal H_A$ and $\mathcal H_B$ be separable Hilbert spaces, $\rho$ a density operator on
$\mathcal H_A\otimes \mathcal H_B$,  $\Phi_\rho$ the map
defined by (\ref{defofphi}), $\rho_B=\tr_A\rho$, and
finally $\mathcal K=\rho_B^{1/2}\mathcal B(\mathcal H_B)\rho_B^{1/2}.$
If $\rho$ is pure (i.e., it is an orthogonal projection of rank 1), then there exists a linear map
$\Gamma_\rho:\mathcal K \to \mathcal B(\mathcal H_A)$ such that
\begin{itemize}
\item $\Phi_\rho\circ \Gamma_\rho= {\rm id}_{\mathcal K}$; i.e., $\Gamma_\rho$ is a right-inverse of $\Phi_\rho$,
\item $\Gamma_\rho(K)\geq \mathbf 0$ whenever $K\geq \mathbf 0$; i.e., $\Gamma_\rho$ is a positive map,
\item $\Gamma_\rho(\rho_B)=\mathbf 1$.
\end{itemize}
Hence for every positive decomposition $(\beta_{y})_{y\in Y}$ of $\rho_B$, the formula
$F_y:=\Gamma_\rho(\beta_y)$ defines a POVM for which  
$\Phi_\rho(F_y)=\beta_y$  holds for all $y\in Y$.
\end{Lemma}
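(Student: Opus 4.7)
The plan is to exploit the Schmidt decomposition of the pure state. Write $\rho=|\psi\rangle\langle\psi|$ with $|\psi\rangle=\sum_{i}\sqrt{\lambda_i}\,e_i\otimes f_i$, where $\lambda_i>0$ and $(e_i)_i$, $(f_i)_i$ are orthonormal systems in $\mathcal{H}_A$ and $\mathcal{H}_B$ respectively. Let $\mathcal{H}_A',\mathcal{H}_B'$ denote their closed spans, so that $\rho_B=\sum_i\lambda_i|f_i\rangle\langle f_i|$ has support $\mathcal{H}_B'$. A direct computation from the definition of $\Phi_\rho$ yields
\[
\Phi_\rho\bigl(|e_j\rangle\langle e_i|\bigr)=\sqrt{\lambda_i\lambda_j}\,|f_i\rangle\langle f_j|,
\]
so in matrix form in the Schmidt bases, the restriction of $\Phi_\rho$ to $\mathcal{B}(\mathcal{H}_A')$ is simply the map $A\mapsto \rho_B^{1/2}A^{T}\rho_B^{1/2}$, where $A^T$ denotes transposition under the identification $e_i\leftrightarrow f_i$.

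Next I construct a right inverse on $\mathcal{H}_A'$. For any $K\in\mathcal{K}$, writing $K=\rho_B^{1/2}M\rho_B^{1/2}$, the cut-down $P_{\mathcal{H}_B'}MP_{\mathcal{H}_B'}$ is uniquely determined by $K$ even though $M$ is not; I will denote it $\rho_B^{-1/2}K\rho_B^{-1/2}$, interpreting the pseudo-inverse as acting only on $\mathcal{H}_B'$. Set
\[
\Gamma_\rho^{0}(K):=\bigl(\rho_B^{-1/2}K\rho_B^{-1/2}\bigr)^{T}\in\mathcal{B}(\mathcal{H}_A').
\]
Transposition relative to a fixed orthonormal basis is a positive order-isomorphism, and conjugation by the positive operator $\rho_B^{-1/2}$ preserves positivity, so $\Gamma_\rho^0$ is a positive linear map. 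Substituting into the explicit formula for $\Phi_\rho$ shows $\Phi_\rho\circ\Gamma_\rho^{0}=\mathrm{id}_{\mathcal{K}}$, and evaluating on $K=\rho_B$ gives $\Gamma_\rho^{0}(\rho_B)=P_{\mathcal{H}_A'}$.

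To lift $P_{\mathcal{H}_A'}$ to $\mathbf{1}_{\mathcal{H}_A}$, I will define
\[
\Gamma_\rho(K):=\Gamma_\rho^{0}(K)+\tr(K)\cdot\bigl(\mathbf{1}-P_{\mathcal{H}_A'}\bigr).
\]
Every $K\in\mathcal{K}$ is trace class with $\tr K=\tr(\rho_B M)$ (well-defined independently of the representative $M$, and $\geq 0$ for $K\geq 0$), so the extra term is a positive operator and $\Gamma_\rho$ remains a positive linear map. Crucially, the correction does not disturb the right-inverse property: since $|\psi\rangle\in \mathcal{H}_A'\otimes\mathcal{H}_B'$, the operator $(\mathbf{1}-P_{\mathcal{H}_A'})\otimes\mathbf{1}$ annihilates $|\psi\rangle$, so $\Phi_\rho(\mathbf{1}-P_{\mathcal{H}_A'})=0$ and $\Phi_\rho\circ\Gamma_\rho=\mathrm{id}_{\mathcal{K}}$ still holds. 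Since $\tr\rho_B=1$, we obtain $\Gamma_\rho(\rho_B)=P_{\mathcal{H}_A'}+(\mathbf{1}-P_{\mathcal{H}_A'})=\mathbf{1}$, establishing the three bullet points.

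The final ``hence'' clause is then essentially automatic: a positive decomposition $(\beta_y)_{y\in Y}$ of $\rho_B$ satisfies $0\leq\beta_y\leq\rho_B$, so by standard operator factorization $\beta_y\in\mathcal{K}$; consequently $F_y:=\Gamma_\rho(\beta_y)\geq 0$, $\sum_y F_y=\Gamma_\rho(\rho_B)=\mathbf{1}$, and $\Phi_\rho(F_y)=\beta_y$. The main technical obstacle I anticipate is the simultaneous handling of (i) the right-inverse requirement, which only constrains $\Gamma_\rho$ on $\mathcal{H}_A'$, and (ii) the normalization $\Gamma_\rho(\rho_B)=\mathbf{1}$, which concerns all of $\mathcal{H}_A$; these two are reconciled by the observation that $\Phi_\rho$ annihilates every operator supported in $(\mathcal{H}_A')^{\perp}$, giving complete freedom to use that subspace for the normalization.
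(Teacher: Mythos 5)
Your proposal is correct and is essentially the paper's own proof in different notation: the paper's map $\tilde Z = JVZ^*V^*J$ is exactly your Schmidt-basis transpose $\bigl(\rho_B^{-1/2}K\rho_B^{-1/2}\bigr)^{T}$ transported to $\mathcal H_A$, and its correction term $(\tr K)\left(\mathbf 1 - Q^A\right)$ is your $\tr(K)\bigl(\mathbf 1 - P_{\mathcal H_A'}\bigr)$, with the right-inverse property likewise verified by noting that $\Phi_\rho$ annihilates operators supported on $(\mathcal H_A')^{\perp}$. Your only additions are presentational: you verify $\Phi_\rho\circ\Gamma_\rho=\mathrm{id}_{\mathcal K}$ via the explicit formula $\Phi_\rho(A)=\rho_B^{1/2}A^T\rho_B^{1/2}$ instead of the paper's trace-pairing computation, and you make explicit (via Douglas-type factorization) the fact $\beta_y\in\mathcal K$, which the paper leaves implicit.
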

\begin{proof}
Suppose $\rho=|\Psi \rangle\langle \Psi|$, where $\Psi\in\mathcal H_A\otimes \mathcal H_B$ is a unit vector.
By the existence of a Schmidt decomposition, we have a  countable  set $S$, an orthonormal system $\left(e^A_n\right)_{n\in S}$ in $\mathcal H_A$, another one $\left(e^B_n\right)_{n\in S}$ in $\mathcal H_B$, and some positive numbers $(\lambda_n)_{n\in S}$ such that
$$\Psi = \sum_{n\in S} \lambda_n \, e^A_n\otimes e^B_n.$$
Moreover, we have that $$\rho_A\equiv \tr_B\rho=
\sum_{n\in S} \lambda_n^2 \, |e^A_n\rangle\langle e^A_n|$$
and similarly, $\rho_B=\sum_{n\in S} \lambda_n^2 \, |e^B_n\rangle\langle e^B_n|$. Let us further consider the partial isometry
$$
V=\sum_{n\in S} |e^A_n\rangle\langle e^B_n|
$$
and  the orthogonal projections  $Q^A=VV^*$ and
$Q^B=V^*V$ onto the
 closures of the subspaces spanned by $\left\{e^A_n|n\in S\right\}$ and
$\left\{e^B_n|n\in S\right\}$, respectively. Finally, we choose
an anti-unitary map $J:\mathcal H_A\to \mathcal H_A$ satisfying
$J e^A_n=e^A_n$ for every $n\in S$, denote $\tilde Z:=J V Z^* V^* J$ for any $Z\in\mathcal B(\mathcal H_B)$, and define $\Gamma_\rho$ by setting
\begin{align*}
\Gamma_\rho(K)= \Gamma_\rho\left(\rho_B^{1/2}Z\rho_B^{1/2}\right):=
\tilde Z + (\tr K)\left(\mathbf 1 -Q^A\right)
\end{align*}
for any $$K=\rho_B^{1/2}Z\rho_B^{1/2}\in
\rho_B^{1/2}\mathcal B(\mathcal H_B)\rho_B^{1/2}=\mathcal K.$$
By the above formula, it is evident that $\Gamma_\rho$ is
well defined (note that $\rho_B^{1/2}Z\rho_B^{1/2}= \rho_B^{1/2}{Z}'\rho_B^{1/2}$ implies
$VZV^*=V{Z}'V^*$), that it is linear (because both the adjoint map and $J$ are anti-linear), that it is a positive map from $\mathcal K$ to $\mathcal H_A$, and  that $\Gamma_\rho(\rho_B)=\mathbf \mathbf 1$.

It is easy to see that
$$\rho\left(\left(\mathbf 1 -Q^A\right)\otimes \mathbf 1\right)=0,$$
and hence that the part $(\tr K) \left(\mathbf 1 -Q^A\right)$ appearing in the definition of $\Gamma_\rho(K)$,
can be ignored when considering the composition $\Phi_\rho\circ \Gamma_\rho$.
 Thus, for any $T\in\mathcal B(\mathcal H_B)$, and for $Z$ and $K$ as before,
we have
\begin{eqnarray}
\nonumber
\tr\Phi_\rho(\Gamma_\rho(K))T = \tr\rho (\Gamma_\rho(K)\otimes T) = \\ 
\nonumber =
\left\langle\Psi,(\tilde Z\otimes T)\Psi\right\rangle =
\\
\nonumber
= \sum_{n,m\in S}
\lambda_n\lambda_m
\left\langle e^A_n\otimes e^B_n,(\tilde Z \otimes T)\left(e^A_m\otimes e^B_m\right)\right\rangle=
\\
\nonumber
= \sum_{n,m\in S}
\lambda_n\lambda_m
\left\langle Z^* e^B_m,e^B_n\right\rangle
\,
\left\langle e^B_n,T e^B_m\right\rangle =
\\
\nonumber
= \sum_{n,m\in S}
\lambda_m
\left\langle e^B_m, Z \rho_B^{1/2}T e^B_m\right\rangle=\\
\nonumber
=\tr \rho_B^{1/2} Z \rho_B^{1/2}B=
\tr KT,
\end{eqnarray}
showing that $\Phi_\rho(\Gamma_\rho(K))=K$ as claimed.
\end{proof}

Suppose now that $\omega$ is realizable by the use of a bipartite quantum system --- with parts corresponding to the  Hilbert spaces $\mathcal H_A$ and $\mathcal H_B$ --- prepared
in the state given by the density operator $\rho$. When defining
$C_n^\omega(X\to Y)$, we needed to consider all protocols involving four different kinds of codings (two on the sender side and two on the receiver side).
It is not difficult to see that all these codings can be incorporated into the choice of partitions / positive operator valued measures, and hence that $\mathcal N\in C_n^\omega(X\to Y)$ if and only if,
 for each $x\in X$ and $r\in [n]$,
there exist a partition of unity $\left(F^{(x)}_{s}\right)_{s \in [n]}$ on $\mathcal H_A$ and a partition of unity $\left(E^{(r)}_{y}\right)_{y \in Y}$ on $\mathcal H_B$ such that
$$
\mathcal N(y|x)= \sum_{r=1}^n \tr\rho\left(F^{(x)}_r\otimes E^{(r)}_y\right)
$$
for all $x\in X$ and $y\in Y$. In particular, if $\rho$
is a density operator corresponding to a {\it maximally entangled state}; i.e., if $$d:=\dim\mathcal H_A= \dim\mathcal H_B <\infty,$$ $\rho$ is pure and $\tr_A\rho=(1/d)\mathbf 1$, then, by Lemma \ref{existenceofPOVM}, the channel $\mathcal N$
is in $C_n^\omega(X\to Y)$ if and only if,
for each $x\in X$ and $r\in [n]$, there exists a positive decomposition  $\left(\beta^{(x)}_s\right)_{s\in [n]}$ of $\mathbf 1/d$ and a partition of unity $(E^r_y)_{y\in Y}$ such that
$$
\mathcal N(y|x) = \sum_{r=1}^n \tr E^r_y \beta^{(x)}_r
$$
for all $x\in X$ and $y\in Y$. In what follows, we will apply the above formula specifically with $n=2$,
and use the notation $E^\pm_y$ and $\beta^{(x)}_\pm$ rather
than $E^r_y$ $(r=1,2)$ and $\beta^{(x)}_r$ $(r=1,2)$.

\section{Main result}  Our goal is to show that a classical bit assisted by a maximally entangled quantum state can be simulated by  two classical bits assisted only by shared randomness.  The proof relies on the method that was used in \cite{F, mi}  to obtain simulation results. We shall need the  following trace inequality.
\begin{Lemma}  For any operators ${\bf 0}\le E^\pm\le\bf 1$ and $\beta_\pm\ge\bf 0$ such that  $\beta_++\beta_-=:\rho_B$ is a  density operator, we have
\begin{equation*}\label{lemma}
\left|\tr E^+E^-\!\rho_B  \right|^2\le
\tr E^+\beta_+ \,+\, \tr E^-\beta_-.
\end{equation*}
\end{Lemma}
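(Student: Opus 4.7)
The plan is to chain two Cauchy--Schwarz inequalities with the operator inequality $E^2\le E$ (valid for $\mathbf 0\le E\le\mathbf 1$), exploiting the decomposition $\rho_B=\beta_++\beta_-$ only at the very end.

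First, I note that for any $\beta\ge\mathbf 0$ the sesquilinear form $\langle A,B\rangle_\beta:=\tr A^*B\beta$ is positive semi-definite, since $\tr A^*A\beta=\tr \beta^{1/2}A^*A\beta^{1/2}\ge 0$. The corresponding Cauchy--Schwarz inequality, applied to the self-adjoint operators $A=E^+$ and $B=E^-$, gives
$$
|\tr E^+E^-\beta|^2\le\tr (E^+)^2\beta\cdot\tr (E^-)^2\beta.
$$
Because $\mathbf 0\le E^\pm\le\mathbf 1$ implies $(E^\pm)^2\le E^\pm$ and $\beta\ge\mathbf 0$ makes $X\mapsto\tr X\beta$ monotone on self-adjoint $X$, this tightens to $|\tr E^+E^-\beta|^2\le\tr E^+\beta\cdot\tr E^-\beta$. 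Setting $\beta=\beta_+$ and $\beta=\beta_-$ in turn and introducing the abbreviations $p:=\tr E^+\beta_+$, $q:=\tr E^-\beta_-$, $u:=\tr E^+\beta_-$, $v:=\tr E^-\beta_+$, one obtains $|\tr E^+E^-\beta_+|\le\sqrt{pv}$ and $|\tr E^+E^-\beta_-|\le\sqrt{uq}$.

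Combining these two bounds via $\rho_B=\beta_++\beta_-$ and the triangle inequality yields $|\tr E^+E^-\rho_B|\le\sqrt{pv}+\sqrt{uq}$. A second Cauchy--Schwarz, now in $\mathbb R^2$, applied to the vectors $(\sqrt p,\sqrt q)$ and $(\sqrt v,\sqrt u)$, gives $\sqrt{pv}+\sqrt{qu}\le\sqrt{(p+q)(u+v)}$, so $|\tr E^+E^-\rho_B|^2\le(p+q)(u+v)$. The proof is completed by the observation that $u+v\le 1$: from $E^\pm\le\mathbf 1$ and $\beta_\pm\ge\mathbf 0$ one has $u\le\tr\beta_-$ and $v\le\tr\beta_+$, and these sum to $\tr\rho_B=1$. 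Therefore $|\tr E^+E^-\rho_B|^2\le p+q$, which is the claim.

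The delicate step is the pairing chosen in the second Cauchy--Schwarz. The ``natural'' pairing $(\sqrt p,\sqrt u)\cdot(\sqrt v,\sqrt q)$ would produce the bound $(\tr E^+\rho_B)(\tr E^-\rho_B)$, which is too weak: already for $\rho_B=\mathbf 1/2$ on $\mathbb C^2$ with $\beta_+=|0\rangle\langle 0|/2$, $\beta_-=|1\rangle\langle 1|/2$, $E^+=|1\rangle\langle 1|$, $E^-=|0\rangle\langle 0|$, it would yield $1/4\not\le 0=\tr E^+\beta_++\tr E^-\beta_-$. Only the ``cross'' pairing $(\sqrt p,\sqrt q)\cdot(\sqrt v,\sqrt u)$ separates the diagonal combination $p+q$---exactly the desired right-hand side---from the off-diagonal combination $u+v$, whose harmless bound $u+v\le\tr\rho_B=1$ merely reflects the normalizations of $E^\pm$ and $\rho_B$.
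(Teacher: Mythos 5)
Your proof is correct and follows essentially the same route as the paper's: split $\tr E^+E^-\rho_B$ along $\rho_B=\beta_++\beta_-$, apply a weighted Cauchy--Schwarz inequality with $(E^\pm)^2\le E^\pm$ to each term, and finish with a scalar Cauchy--Schwarz/AM--GM step that exploits a normalization summing to $\tr\rho_B=1$. The only (immaterial) difference is where $E^\pm\le\mathbf 1$ is invoked: the paper relaxes $\tr\left((E^\mp)^2\beta_\pm\right)\le\tr\beta_\pm$ at once, obtaining the bounds $\sqrt{c_\pm t_\pm}$ with $t_++t_-=1$, while you keep the cross terms $u=\tr E^+\beta_-$, $v=\tr E^-\beta_+$ and only bound $u+v\le 1$ at the end.
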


\begin{proof}
Set $c_\pm = \tr E^\pm \beta_\pm $ and $t_\pm=\tr\beta_\pm$;
then $c_\pm$ and $t_\pm$ are all nonnegative, and $t_++t_-=1$.
Using the Cauchy--Schwarz inequality $$|\tr AB|^2\le (\tr A^*A) \cdot (\tr B^*B),$$ we have
\begin{eqnarray}
\nonumber
\left|\tr  E^+ E^-\beta_+ \right|^2 =\left|\tr \beta_+^{1/2}E^+E^-\beta_+^{1/2}\right|^2
\le \\ \nonumber \le
\tr\left((E^+)^2 \beta_+\right) \cdot\tr\left((E^-)^2\beta_+\right)
\\
\nonumber
\le \left(\tr E^+\beta_+\right) \cdot \,\tr\beta_+  \;\;\;= \;c_+ t_+,
\end{eqnarray}
and, similarly,
$\left|\tr \beta_-E^+E^-\right|^2\leq c_- t_-$
by interchanging $+$ and $-$ throughout.
Therefore,
\begin{eqnarray}
\nonumber
\left|\tr E^+E^-\!\rho_B \right|^2 =
 \left|\tr E^+E^-\beta_+  \, + \, \tr E^+E^-\beta_-\right|^2\leq
\\
\nonumber
\leq
 \left(|\tr E^+E^- \beta_+| \,  + \, |\tr E^+E^-\beta_-|\right)^2
\\
\nonumber
 \le
  \left(\sqrt{c_+t_+}  \,+ \, \sqrt{c_+t_+}\right)^2.
 \end{eqnarray}
Computing this last square we find that
\begin{eqnarray}
 \nonumber
 \left(\sqrt{c_+t_+}  +  \sqrt{c_+t_+}\right)^2 =\\ \nonumber =
  c_+t_+ + c_-t_- + 2 \,\sqrt{(c_+t_-)(c_-t_+)}\leq
 \\
  \nonumber
  \leq
  c_+t_+ + c_-t_- + 2\, \frac{c_+t_- + c_-t_+}{2} \,=\, c_+ + c_-
 \end{eqnarray}
 by the inequality between the geometric and arithmetic means and the fact that $t_++t_-=1$.  Putting together the last two inequalities, we have $\left|\tr E^+E^-\!\rho_B \right|^2\leq c_++c_-$, 
  as claimed.
\end{proof}

\begin{Th}\label{main}Let $\omega$ be a non-signaling resource realizable by the use of a  bipartite quantum system prepared in a  maximally entangled state. Then $C_2^{\omega}(X\to Y)\subseteq C_4^{SR}(X\to Y)$  for any finite alphabets $X$ and $Y$, i.e.,  a classical bit assisted by  $\omega$  can be simulated by  two classical bits assisted only by  shared randomness.
\end{Th}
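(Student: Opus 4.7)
The plan is to explicitly decompose any $\mathcal N \in C_2^\omega(X\to Y)$, with $\omega$ coming from a maximally entangled state, as a convex combination of stochastic matrices each having at most four nonzero rows. The starting point is the representation established in Section~\ref{prelim}: such a $\mathcal N$ has the form $\mathcal N(y|x) = \tr E_y^+ \beta_+^{(x)} + \tr E_y^- \beta_-^{(x)}$, where $(E_y^+)_y, (E_y^-)_y$ are POVMs on $\mathcal H_B$ and $(\beta_\pm^{(x)})$ is a positive decomposition of $\rho_B = \mathbf 1/d$ for each $x$.

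The natural approach is to set up a simulation protocol in which shared randomness samples two candidate outputs $y_+,y_- \in Y$ independently, each $y_\pm$ drawn from the distribution $\mu_\pm(y) := \tr E_y^\pm/d$ that Bob would see from POVM $(E_y^\pm)$ applied to the maximally mixed state. Alice, upon seeing $x$ and the pair $(y_+,y_-)$, computes acceptance weights $\alpha_\pm(x,y_\pm) := d\,\tr E_{y_\pm}^\pm \beta_\pm^{(x)}/\tr E_{y_\pm}^\pm$, which lie in $[0,1]$ by $\beta_\pm^{(x)}\le \rho_B$. She flips two independent biased coins with these biases and transmits the pair through the four-letter noiseless channel. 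Bob's decoder maps the two-bit message together with $(y_+,y_-)$ to an output in $Y$ whose range has at most four distinct elements per realization of the shared randomness, placing each conditional channel in $C_4$.

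A direct computation shows that the ``unambiguous'' messages, in which Bob outputs $y_+$ on $(1,0)$ and $y_-$ on $(0,1)$, contribute only $p_+(x)\tr E_y^+\beta_+^{(x)} + p_-(x)\tr E_y^-\beta_-^{(x)}$, where $p_\pm(x) = \tr\beta_\pm^{(x)}$. Hence the remaining ``ambiguous'' messages $(1,1)$ and $(0,0)$ must supply exactly $p_-(x)\tr E_y^+\beta_+^{(x)} + p_+(x)\tr E_y^-\beta_-^{(x)}$, and the key step is to design Bob's decoder on those two messages --- possibly with the aid of one or two extra candidates drawn from shared randomness --- so that this correction is realized by a genuine probability distribution in $Y$ with at most four-element support.

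The main obstacle lies exactly here: the naive symmetric decoder on $(1,1)$ and $(0,0)$ introduces an unwanted $x$-independent offset proportional to $\mu_\pm(y)$, and taming it requires a subtler choice of acceptance/rejection. This is precisely where the trace inequality of the preceding Lemma intervenes: specialized to $E^\pm = E_y^\pm$, $\beta_\pm = \beta_\pm^{(x)}$, it reads $(\tr E_y^+ E_y^-/d)^2 \le \mathcal N(y|x)$, which is exactly the pointwise nonnegativity needed to certify that the corrective conditional distribution supplied on the ambiguous messages is honest. Once this is in place, the explicit decomposition of $\mathcal N$ as a mixture of elements of $C_4(X\to Y)$ falls out, establishing $\mathcal N \in C_4^{SR}(X\to Y)$.
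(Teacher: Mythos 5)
Your protocol is set up correctly as far as it goes: the acceptance weights $\alpha_\pm$ do lie in $[0,1]$, your accounting of the unambiguous branch is right, and so is the identification of what the messages $(1,1)$ and $(0,0)$ must supply. The genuine gap is that the ``key step'' is asserted rather than carried out, and within your architecture it in fact fails. Conditional on $(1,1)$, the candidates $(y_+,y_-)$ are distributed proportionally to $\tr E^+_{y_+}\beta_+^{(x)}\cdot\tr E^-_{y_-}\beta_-^{(x)}$, so a fair choice between them supplies exactly half of the required correction; but conditional on $(0,0)$ the candidates carry the \emph{cross-paired} distributions, proportional to $\tr E^+_{y_+}\beta_-^{(x)}$ and $\tr E^-_{y_-}\beta_+^{(x)}$, while the remaining target is $\tfrac12\bigl(p_-(x)\tr E^+_y\beta_+^{(x)}+p_+(x)\tr E^-_y\beta_-^{(x)}\bigr)$. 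Bob's decoder must not depend on $x$, any extra candidates drawn from shared randomness are likewise $x$-independent, and Alice has already spent all four message patterns, so no decoding rule on $(0,0)$ can reproduce this $x$-dependent target for all $x$ simultaneously. Moreover, the inequality you invoke --- the trace lemma specialized to single effects, $(\tr E^+_yE^-_y/d)^2\le\mathcal N(y|x)$ --- is far too weak to ``certify'' any corrective distribution: pointwise (singleton) inequalities do not guarantee feasibility of the needed redistribution.

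The paper's proof differs at exactly this point. It draws the shared randomness as a $4$-tuple $I=(i_1,i_2,i_3,i_4)$ with \emph{correlated} pair distribution $p_I=\frac1{d^2}\bigl(\tr E^+_{i_1}E^-_{i_2}\bigr)\bigl(\tr E^+_{i_3}E^-_{i_4}\bigr)$ (not two independent draws from your marginals $\mu_\pm$), applies the trace lemma to \emph{all subset sums} $E^\pm_S=\sum_{i\in S}E^\pm_i$ to obtain the family of inequalities $A_j(S)\ge P(S^4)=\frac1{d^2}\bigl(\tr E^+_SE^-_S\bigr)^2$ for every $S\subseteq[k]$, and feeds this Hall-type condition into the Supply--Demand Theorem of \cite{LoPlu} to produce, for each input $j$, a coupling $P_j$ of $P$ with the $j$-th column of $A$, supported on pairs $(I,i)$ with $i$ occurring in $I$. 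Operationally this reverses your division of labor: Alice, who knows $x$, samples from the $x$-dependent conditional coupling and uses her two bits merely to point Bob at one of the at most four entries of $I$; Bob performs no nontrivial decoding, which is precisely what evades the $x$-independence obstruction your ambiguous branch runs into. To rescue a rejection-sampling argument you would need, at minimum, the full subset inequalities plus some existence principle playing the role of the Supply--Demand theorem --- at which point you have reconstructed the paper's proof.
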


\begin{proof}  Let $l=|X|$ and $k=|Y|$. The Theorem is equivalent to the statement that any $k\times l$ matrix $A=(a_{ij})_{i,j}$  with entries $$a_{ij}=\tr E^+_i\beta_+^{(j)} \,+\, \tr E^-_i \beta_-^{(j)},$$ where  $E_i^\pm$ and $\beta_\pm^{(j)}$ are $d\times d$ positive semidefinite matrices with $E^+_1+\dots+E^+_k=E^-_1+\dots+E^-_k=\bf 1$  and $\beta_+^{(j)}+\beta_-^{(j)}={\bf 1}/d$ for all $j\in[l]$, is a convex combination of stochastic matrices with at most four non-zero rows.

For $I=(i_1,i_2, i_3, i_4)\in [k]^4$, put
\begin{equation}
\label{p}
p_I=\frac1{d^2}\left(\tr E_{i_1}^+E_{i_2}^-\right) \left( \tr E_{i_3}^+E_{i_4}^-\right).
\end{equation}

We have $p_I\ge 0$ for all $I$.
Thus, we get a  measure $P$
on $[k]^4$ defined by $P(T)=\sum_{I\in T}p_I$.
Due to the multilinear nature of \eqref{p}  and the assumption that $E_1^\pm$, \dots, $E_k^\pm$ is a partition of unity (POVM), we see that
\[P([k]^4)=\frac1{d^2}\left(\tr({\bf 1}^2)\right) \left(\tr ({\bf 1}^2)\right)=\rm 1,\]
so $P$ is  a probability measure.
Now set $$E^\pm_S:=\sum_{i\in S} E^\pm_i$$
for any $S\subseteq [k]$.
Since  $\mathbf 0\le E^\pm_S\le \bf 1$,  we may apply Lemma \ref{lemma} with $\rho_B= {\bf 1}/d$ to get
\[
P(S^4)=\frac{1}{d^2}\left(\tr E^+_SE^-_S\right)^2\le\tr E^+_S \beta_+^{(j)} \, + \, \tr  E^-_S \beta_-^{(j)}
\]
for all $j$.
The right hand side here is $A_j(S)$, where $A_j$ is the probability measure on $[k]$ given by the numbers $a_{ij}$ $(i\in[k])$; i.e.\! the $j^{\rm th}$ column of the matrix $A$. So we have
\[A_j(S)\ge P\left(S^4\right)\qquad\textrm{ for all }\;S\subseteq [k].\]

Let us connect $I\in [k]^4$ to $i\in [k]$ by an edge if $i$ occurs in $I$. This gives us a bipartite graph. The neighborhood of any set $T\subseteq [k]^4$ is the set $S\subseteq [k]$ of indices occurring in some element of $T$. We always have $T\subseteq S^4$, whence
$$A_j(S)\ge P(S^4)\ge P(T).$$ Thus, by the Supply--Demand Theorem~\cite[2.1.5.\ Corollary]{LoPlu}, and using the fact that both $A_j$ and $P$ are probability measures, there exists a probability measure $P_j$ on $[k]^4\times [k]$ which is supported on the edges of the graph and has marginals $P$ and $A_j$. Whenever $p_I\ne 0$, let $B(I)$ be the $k\times l$ stochastic matrix whose $j$-th column is given by the conditional distribution  $P_j|I$ on $[k]$. 
 Now $B(I)$ has at most four nonzero rows,  and $A=\sum p_I B(I)$, as desired.
\end{proof}

\begin{Rem}
Suppose that our bipartite quantum system is {\it not} in a maximally entangled state, and hence $\rho_B$ is not (necessarily) a multiple of the identity. Still, the above proof could be virtually copied if we had a bilinear, scalar-valued map $D$ satisfying
\begin{itemize}
\item[(i)]  $D(Z_1,Z_2)\geq 0$ whenever $Z_1, Z_2\geq \mathbf 0$,

\item[(ii)] $D(\mathbf 1,\mathbf 1)=1$,

\item[(iii)] $|D(E^+\!,E^-)|^2\, \leq \, \tr E^+\beta_+ \,+ \,\tr E^-\beta_-$ whenever $\mathbf 0 \leq E^\pm\leq \mathbf 1$
and $\rho_B=\beta_+ + \beta_-$ is a positive decomposition of $\rho_B$.
\end{itemize}
Indeed, having such a bilinear map, we could replace (\ref{p}) by setting $$p_I = D\left(E^+_{i_1},E^+_{i_2}\right)\, D\left(E^+_{i_3},E^+_{i_4}\right)$$ and continue the rest of the argument unchanged. Actually, in the proof we {\it did} set $p_I$ to be of the mentioned form; specifically, with $D$ being the bilinear map given by the formula $D(Z_1,Z_2)=(1/d)\tr Z_1 Z_2$.

When $\rho_B$ is not necessarily ${\bf 1}/d$, one could try to replace the previous formula by $D(Z_1,Z_2)=
(\tr Z_1 Z_2 \rho_B + \tr Z_2 Z_1 \rho_B) / 2$. This reduces to the previous one when $\rho_B=\mathbf 1/d$, and it satisfies requirements (ii) and (iii); this latter one follows from Lemma \ref{lemma} and the fact that for self-adjoint operators  $E^\pm$, we have
\begin{align*}|D(E^+,E^-)|^2 = \left({\rm Re}(\tr E^+ E^-\!\rho_B)\right)^2
\leq \\ \leq |\tr E^+ E^-\!\rho_B|^2.\end{align*}
However, this $D$ does not satisfy the positivity condition (i) --- unless of course $\rho_B$ is a multiple of the identity.

Another idea is to try setting $D(Z_1,Z_2)=\tr Z_1\rho_B^{1/2}
Z_2\rho_B^{1/2}$, which again reduces to the formula used in our proof in case $\rho_B$ is a multiple of the identity. The thus defined $D$ is evidently bilinear and satisfies both the positivity (i) and the normalization (ii) requirements. However, examples show that in general it fails to satisfy requirement (iii) --- unless, for example, if $\rho_B$ is
a multiple of a projection.

Having experimented with various candidate formulas, we grew skeptical about the possibility of simultaneously satisfying all listed requirements. Thus, while we still believe that the theorem remains true even if arbitrary entangled states are allowed, we expect the general proof to follow a somewhat different direction.

\end{Rem}

\appendix

\section{The ``two winning, two losing boxes'' game}
\label{appendix}

Let $\rho=|\Psi\rangle\langle \Psi|$, where $\Psi =
\frac{1}{\sqrt{2}}(e_1\otimes e_2 - e_2\otimes e_1)$ and
$(e_1,e_2)$ is the standard basis  of $\mathbb C^2$. Before the game begins, Alice and Bob prepare a pair of quantum bits in the state given by $\rho$; Alice then takes the first, Bob the second quantum bit with herself / himself. Upon learning the positions  $a,b\in \{1,2,3,4\}$ of treasures, Alice performs the measurement corresponding to the $2\times 2$ partition of unity $F^{\{a,b\}}_+$, $F^{\{a,b\}}_-$ and sends the result, a $+$ or a $-$ sign, to Bob via the noiseless one-bit channel. For the specific protocol we want to describe, we will have
$F^{\{1,2\}}_+ = \mathbf 1$, $F^{\{1,2\}}_- = \mathbf 0$
(i.e., in case the treasures are in the first two boxes, Alice will surely send a ``$+$'' to Bob), $F^{\{1,3\}}_\pm = (1/2)(\mathbf 1\pm \sigma_z)$,
$F^{\{2,4\}}_\pm = (1/2)(\mathbf 1\mp \sigma_z)$,
$F^{\{1,4\}}_\pm = (1/2)(\mathbf 1\pm \sigma_x)$,
$F^{\{2,3\}}_\pm = (1/2)(\mathbf 1\mp \sigma_x)$,
where 
 $\sigma_z$ and $\sigma_x$ are two Pauli matrices,
and, finally, $F^{\{3,4\}}_+ = 0$, $F^{\{3,4\}}_- = I$ (so that in case the treasures are in the last two boxes, Alice will surely send a ``$-$'' to Bob).

After receiving the $+$ or $-$ sign from Alice, Bob performs the measurement corresponding to the partition of unity $E^\pm_1$, $E^\pm_2$, $E^\pm_3$, $E^\pm_4$ and chooses the box according to the result.
We will specifically have $E^+_1=(1/2)\left(\mathbf 1- (\sigma_z+\sigma_x)/\sqrt{2}\right)$,
$E^+_2=\mathbf 1-E^+_1$,
$E^+_3=0$, $E^+_4=0$ and $E^-_1=0$,
$E^-_2=0$,
$E^-_3=(1/2)\left(\mathbf 1 + (\sigma_z-\sigma_x)/\sqrt{2}\right)$,
$E^-_4=\mathbf 1-E^-_3$.

As $E^+_3=E^+_4=0$ and likewise,
$E^-_1=E^-_2=0$, Bob will always choose one of the first two boxes if he receives a $+$, and one of the last two
boxes if he receives a $-$ sign. Hence if the two treasure boxes are either the first
two or the last two, they will win with certainty. On the other hand, if the treasures are e.g.\! in boxes $1$ and $3$, then they win with probability
\begin{align*}
\tr\rho\left(F^{\{1,3\}}_+\otimes (E^+_1 + E^+_3)\right)+\\ +\tr\rho\left(F^{\{1,3\}}_-\otimes (E^-_1 + E^-_3)\right),
\end{align*}
which, after substitution, turns out to be  $$\frac{1}{2}+\frac{1}{4}\sqrt{2}=\left(2+\sqrt{2}\right)/4.$$ 
It turns out that all other cases  result in the same probability of success, yielding the claimed overall winning probability of $\left(4+\sqrt{2}\right)/6$.
We finish the discussion of  this example by pointing out that all listed measurements are either trivial or projective; the entire protocol can be easily realized experimentally using e.g.\! a pair of spin-half particles prepared in the zero-total-spin state and spin measurements performed on individual particles.

\end{document}